\newcommand{\commentOut}[1]{}
\newtheorem{theorem}{Theorem}
\newtheorem{lemma}{Lemma}
\journal{}
\begin{document}

\begin{frontmatter}

%% Title, authors and addresses

%% use the tnoteref command within \title for footnotes;
%% use the tnotetext command for the associated footnote;
%% use the fnref command within \author or \address for footnotes;
%% use the fntext command for the associated footnote;
%% use the corref command within \author for corresponding author footnotes;
%% use the cortext command for the associated footnote;
%% use the ead command for the email address,
%% and the form \ead[url] for the home page:
%%
%% \title{Title\tnoteref{label1}}
%% \tnotetext[label1]{}
%% \author{Name\corref{cor1}\fnref{label2}}
%% \ead{email address}
%% \ead[url]{home page}
%% \fntext[label2]{}
%% \cortext[cor1]{}
%% \address{Address\fnref{label3}}
%% \fntext[label3]{}

\title{An Elegant Algorithm for the Construction of Suffix Arrays}
 
%% \ead[url]{home page}
%% \fntext[label2]{}
%% \cortext[cor1]{}
%% \address{Address\fnref{label3}}
%% \fntext[label3]{}
 
\author[raj]{Sanguthevar Rajasekaran}
\ead{rajasek@engr.uconn.edu}
\author[marius]{Marius Nicolae}
\ead{marius.nicolae@engr.uconn.edu}

%%%%%%%%%%%%%%%%%%%%%%%%%%%%%%%%%%%%%%%%%%%%%%
%%                                          %%
%% Enter the authors' addresses here        %%
%%                                          %%
%%%%%%%%%%%%%%%%%%%%%%%%%%%%%%%%%%%%%%%%%%%%%%

\address{Dept. of Computer Science and Engineering, Univ. of Connecticut,
    Storrs, CT, USA}

%% use optional labels to link authors explicitly to addresses:
%% \author[label1,label2]{<author name>}
%% \address[label1]{<address>}
%% \address[label2]{<address>}

\begin{abstract}
        The suffix array is a data structure that finds numerous applications in string 
        processing problems for both linguistic texts and biological data. It has been 
        introduced as a memory efficient alternative for suffix trees. The suffix array 
        consists of the sorted suffixes of a string. There
        are several linear time suffix array construction algorithms (SACAs) known in the literature.
        However, one of the fastest algorithms in practice has a worst case
        run time of $O(n^2)$. The problem of designing practically and  theoretically 
        efficient techniques remains open.
        
        In this paper we present an elegant
        algorithm for suffix array construction which takes
        linear time with high probability; the probability is on the space of all possible
        inputs. Our algorithm is one of the simplest of the known SACAs and it opens up 
        a new dimension of suffix array construction that has not been explored until now. 
        Our algorithm is easily parallelizable. We offer parallel implementations on various parallel
        models of computing. 
        We prove a lemma on the $\ell$-mers of a random string which might find
        independent applications. 
        We also present another algorithm that utilizes 
        the above algorithm. This algorithm is called RadixSA and  
        has a worst case run time of  $O(n\log{n})$. RadixSA introduces an idea that may find independent
        applications as a speedup technique for other SACAs. 
        An empirical comparison of RadixSA with other algorithms
        on various datasets reveals that our algorithm is one of the fastest algorithms to date.
        The C++ source code is freely available at
        {http://www.engr.uconn.edu/$\sim$man09004/radixSA.zip}.
        
\end{abstract}

\begin{keyword}
suffix array construction algorithm \sep parallel algorithm \sep high probability bounds
%% keywords here, in the form: keyword \sep keyword

%% MSC codes here, in the form: \MSC code \sep code
%% or \MSC[2008] code \sep code (2000 is the default)

\end{keyword}

\end{frontmatter}

%%
%% Start line numbering here if you want
%%
% \linenumbers

%% main text

\section{Introduction}

The suffix array is a data structure that finds numerous applications in string
processing problems for both linguistic texts and biological data. It has been
introduced in \cite{MaMy93} as a memory efficient alternative to suffix trees. 
The suffix array of
a string $T$ is an array $A$, ($|T|=|A|=n$) which gives the lexicographic order
of all the suffixes of $T$. Thus, $A[i]$ is the starting position of
the lexicographically $i$-th smallest suffix of $T$.

The original suffix array construction
algorithm \cite{MaMy93} runs in $O(n\log n)$ time. It
is based on a technique called {\em prefix doubling:} assume that the suffixes
are grouped into buckets such that suffixes 
in the same bucket share
the same prefix of length $k$. Let $b_i$ be the bucket
number for suffix $i$. Let $q_i=(b_i, b_{i+k})$. Sort the suffixes with
respect to $q_i$ using radix sort. As a result, the suffixes become sorted by their first $2k$
characters. Update the bucket numbers and repeat the process until all the
suffixes are in buckets of size $1$. This process takes no more than $\log n$
rounds.
The idea of sorting suffixes in one bucket based on the bucket
information of nearby suffixes is called {\em induced copying}. It appears in some form or
another in many of the algorithms for suffix array
construction.

Numerous papers have been written on suffix arrays. A survey
on some of these algorithms can be found in \cite{PST2007}. The authors of
\cite{PST2007} categorize suffix array construction algorithms  (SACA) into
five based on the main techniques employed: 1) Prefix Doubling (examples include
\cite{MaMy93} - run time  $=O(n\log n)$; \cite{LaSa99} - run time $=O(n\log
n)$); 2) Recursive (examples include \cite{KJP04} - run time $=O(n\log\log
n)$); 3) Induced Copying (examples include  \cite{BaBr05} - run time $=O(n\sqrt{\log n})$); 
4) Hybrid (examples
include \cite{ItTa99} and \cite{KoAl03} - run time $=O(n^2\log n)$); and 5)
Suffix Tree (examples include \cite{Kur99} - run time $=O(n\log \sigma)$ where
$\sigma$ is the size of the alphabet).

In 2003, three independent groups \cite{KoAl03,KaSa03,KSP+03} found the first
linear time suffix array construction algorithms which do not require building a
suffix tree beforehand. For example, in \cite{KoAl03} the suffixes
are classified as either $L$ or $S$. Suffix $i$ is an $L$ suffix if it is lexicographically larger than
suffix $i+1$, otherwise it is an $S$ suffix. Assume that the number of $L$
suffixes is less than $n/2$, if not, do this for $S$ suffixes. Create a new
string where the segments of text in between $L$ suffixes are renamed
to single characters. The new text has length no more than $n/2$ 
and we recursively find its suffix array. This suffix array gives the order of 
the $L$ suffixes in the original string. This order is used to induce the order 
of the remaining suffixes.

Another linear time algorithm, called {\em skew}, is given
in \cite{KaSa03}. It first sorts those suffixes $i$ with $i \;{\bf mod}\; 3 \neq
0$ using a recursive procedure. The order of these suffixes is then used to
infer the order of the suffixes with $i
\;{\bf mod}\; 3 = 0$. Once these two groups are determined we can compare
one suffix from the first group with one from the second group in constant time.
The last step is to merge the two sorted groups, in linear time.

Several other SACAs have been proposed in the literature in recent
years (e.g., \cite{NZC09,ScSt07}). Some of the algorithms with superlinear
worst case run times perform better in practice than the linear ones.
One of the currently best performing algorithms in practice is the $BPR$ 
algorithm of \cite{ScSt07} which
has an asymptotic worst-case run time of $O(n^2)$. $BPR$ first sorts all the
suffixes up to a certain depth, then focuses on one bucket at a time and
repeatedly refines it into sub-buckets.

In this paper we present an elegant algorithm for suffix array construction.
This algorithm takes linear time with high probability. Here the probability is 
on the space of all possible inputs. Our algorithm is one of the simplest 
algorithms known for constructing suffix arrays. It opens up a new dimension in
suffix array construction, i.e., the development of algorithms with provable
expected run times. This dimension has not been explored before. We prove a
lemma on the $\ell$-mers of a random  string which might find independent
applications. Our algorithm is  also nicely parallelizable. We offer parallel
implementations of our algorithm on various parallel models of computing.

We also present another algorithm for suffix array construction that utilizes 
the above algorithm. This algorithm, called RadixSA, is based on bucket sorting
and has a worst case run time of $O(n\log{n})$. It employs an idea which, 
to the best of our knowledge, has not been directly exploited until now. 
RadixSA selects the order in which buckets are processed based on a heuristic
such that, downstream, they impact as many other buckets as possible. 
This idea may find independent application as a standalone speedup technique for 
other SACAs based on bucket sorting.
RadixSA also employs a generalization of Seward's copy method \cite{Sew2000} 
(initially described in \cite{BuWe94}) to
detect and handle repeats of any length. We compare RadixSA with other algorithms 
on various datasets.

\section{A Useful Lemma}

Let $\Sigma$ be an alphabet of interest and let $S=s_1s_2\ldots s_n\in
\Sigma^*$. Consider the case when $S$ is generated randomly, i.e., each $s_i$ is
picked uniformly randomly from $\Sigma$ ($1\leq i\leq n$). Let $L$ be the set of
all $\ell$-mers of $S$. Note that $|L|=n-\ell+1$. What can we say about the
independence of these $\ell$-mers? In several papers analyses have been done
assuming that these $\ell$-mers are independent (see e.g., \cite{BuTo00}). These authors point out that
this assumption may not be true but these analyses have proven to be useful in
practice. In this Section we prove the following Lemma on these $\ell$-mers.

\begin{lemma}\label{rslemma} Let $L$ be the set of all $\ell$-mers of a random
string generated from an alphabet $\Sigma$. Then, the $\ell$-mers in $L$ are
pairwise independent. These $\ell$-mers need not be $k$-way independent for
$k\geq 3$.  \end{lemma}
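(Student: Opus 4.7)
The plan is to prove pairwise independence by analyzing the joint distribution of two $\ell$-mers directly, and to disprove $k$-way independence for $k \geq 3$ by exhibiting a short explicit counter-example.

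First, fix two positions $i < j$ and let $d = j - i$. If $d \geq \ell$, the $\ell$-mers $L_i = s_i \cdots s_{i+\ell-1}$ and $L_j = s_j \cdots s_{j+\ell-1}$ are determined by disjoint blocks of characters of the uniformly random string $S$, so their independence follows immediately from the independence of the individual $s_k$'s. The more delicate case is $d < \ell$, where $L_i$ and $L_j$ share $\ell - d$ characters. Here I would write out $\Pr[L_i = A, L_j = B]$ explicitly for arbitrary target patterns $A, B \in \Sigma^{\ell}$: the shared positions impose a consistency constraint between the tail of $A$ and the head of $B$, and the resulting joint probability factors in precisely the sense captured by pairwise independence.

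Second, to exhibit failure of $k$-way independence for $k \geq 3$, it is enough to produce one violating triple. A convenient choice is $\ell = 2$ applied to a random string $s_1 s_2 s_3 s_4$ of length four, with the three consecutive $2$-mers $L_1, L_2, L_3$. The event $L_1 = L_2 = L_3$ forces $s_1 = s_2 = s_3 = s_4$ and hence occurs with probability $|\Sigma|^{-3}$. Under genuine $3$-way independence the same probability would be $\sum_{A \in \Sigma^{\ell}} |\Sigma|^{-3\ell} = |\Sigma|^{-2\ell} = |\Sigma|^{-4}$. The gap between $|\Sigma|^{-3}$ and $|\Sigma|^{-4}$ settles the claim for any alphabet with $|\Sigma| \geq 2$.

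The main obstacle is the overlapping-pair analysis. The key combinatorics-on-words observation is that a collision (or more generally a prescribed joint pattern) between two $\ell$-mers at offset $d$ is equivalent to their shared length-$(d+\ell)$ window being constrained by a periodicity/consistency requirement, and exactly $|\Sigma|^{d}$ of the $|\Sigma|^{d+\ell}$ possible windows are $d$-periodic. Once that observation is in place, the pairwise factorization reduces to a routine count, and the same periodicity reasoning, applied to three mutually-overlapping $\ell$-mers, yields the $3$-wise counter-example in a single line.
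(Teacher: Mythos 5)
Your closing periodicity count is the right engine for the overlapping case, and it is in fact cleaner than the paper's own argument: the paper tracks $k$ chains of forced equalities and verifies $\sum_{q=1}^{k}\left\lfloor\frac{\ell+k-q}{k}\right\rfloor=\ell$, whereas your observation that the event $P_i=P_j$ at offset $d=j-i<\ell$ is exactly the event that the shared length-$(d+\ell)$ window is $d$-periodic, together with the count that exactly $|\Sigma|^{d}$ of the $|\Sigma|^{d+\ell}$ equally likely windows are $d$-periodic, gives $Prob[P_i=P_j]=|\Sigma|^{-\ell}$ in one line. Your $3$-wise counterexample ($\ell=2$, three consecutive $2$-mers, actual probability $|\Sigma|^{-3}$ versus $|\Sigma|^{-4}$ under mutual independence) is correct and plays the same role as the paper's example ($\ell=3$, $|\Sigma|^{-5}$ versus $|\Sigma|^{-6}$).

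However, the middle step of your plan, as stated, would fail: for overlapping $\ell$-mers the joint distribution does \emph{not} factor. For arbitrary targets $A,B\in\Sigma^{\ell}$ and offset $d<\ell$, $Prob[P_i=A,\,P_j=B]$ equals $0$ when the length-$(\ell-d)$ tail of $A$ disagrees with the head of $B$, and equals $|\Sigma|^{-(d+\ell)}$ when they agree; neither value is $|\Sigma|^{-2\ell}$. For instance with $\ell=2$, $d=1$: $Prob[s_1s_2=ab,\,s_2s_3=bc]=|\Sigma|^{-3}\neq|\Sigma|^{-4}$. So "pairwise independence" in the textbook sense of factorization of the joint law is false for overlapping pairs, and your claim that "the resulting joint probability factors" cannot be carried out. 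What is true --- and is all the paper actually proves and later uses in the analysis of Algorithm SA1 --- is the weaker collision statement $Prob[P_i=P_j]=|\Sigma|^{-\ell}$, i.e.\ the probability of equality is the same as if the two $\ell$-mers were independent. You should therefore drop the factorization claim for arbitrary pattern pairs and prove the lemma in this collision-probability sense only, which your periodicity count already accomplishes; the non-overlapping case and the $3$-wise counterexample can stand as you wrote them.
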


\begin{proof} Let $A$ and $B$ be any two $\ell$-mers in $L$. If $x$ and $y$ are
non-overlapping, clearly, $Prob[A=B]=(1/\sigma)^{\ell}$, where
$\sigma=|\Sigma|$. Thus, consider the case when $x$ and $y$ are overlapping.

Let $P_i=s_is_{i+1}\ldots s_{i+\ell-1}$, for $1\leq i\leq (n-\ell+1)$. Let
$A=P_i$ and $B=P_j$ with $i< j$ and $j\leq (i+\ell-1)$. Also let $j=i+k$ where
$1\leq k\leq (\ell-1)$.

Consider the special case when $k$ divides $\ell$. If $A=B$, then it should be
the case that $s_i=s_{i+k}=s_{i+2k}=\cdots =s_{i+\ell}$;
$s_{i+1}=s_{i+k+1}=s_{i+2k+1}=\cdots=s_{i+\ell+1}$; $\cdots$; and
$s_{i+k-1}=s_{i+2k-1}=s_{i+3k-1}=\cdots=s_{i+\ell+k-1}$. In other words, we
have $k$ series of equalities. Each series is of length $(\ell/k)+1$. The
probability of all of these equalities is $\left (\frac{1}{\sigma}\right
)^{\ell/k} \left (\frac{1}{\sigma}\right )^{\ell/k}$ $\cdots$  $\left
(\frac{1}{\sigma}\right )^{\ell/k}$ $=\left (\frac{1}{\sigma}\right )^{\ell}$.

As an example, let $S=abcdefghi$, $\ell=4$, $k=2$, $A=P_1$, and $B=P_3$. In
this case, the following equalities should hold: $a=c=e$ and $b=d=f$. The
probability of all of these equalities is $
(1/\sigma)^2(1/\sigma)^2=(1/\sigma)^4=(1/\sigma)^\ell$.

Now consider the general case (where $k$ may not divide $\ell$). Let
$\ell=qk+r$ for some integers $q$ and $r$ where $r<k$. If $A=B$, the following
equalities will hold: $s_i=s_{i+k}=s_{i+2k}=\cdots
=s_{i+\lfloor(\ell+k-1)/k\rfloor k}$;
$s_{i+1}=s_{i+1+k}=s_{i+1+2k}=\cdots=s_{i+1+\lfloor(\ell+k-2)/k\rfloor k}$;
$\cdots$; and
$s_{i+k-1}=s_{i+k-1+k}=s_{i+k-1+2k}=\cdots=s_{i+k-1+\lfloor(\ell/k)\rfloor k}$.

Here again we have $k$ series of equalities. The number of elements in the
$q$th series is $1+\left\lfloor\frac{\ell+k-q}{k}\right\rfloor$, for $1\leq
q\leq k$. The probability of all of these equalities is $(1/\sigma)^x$ where
$x=\sum_{q=1}^k\left\lfloor\frac{\ell+k-q}{k}\right\rfloor$.
$$x=\left\lfloor\frac{(q+1)k+r-1}{k}\right\rfloor+\left\lfloor\frac{(q+1)k+r-2}{k}\right\rfloor+\cdots+\left\lfloor\frac{(q+1)k}{k}\right\rfloor$$
$$+
\left\lfloor\frac{(q+1)k-1}{k}\right\rfloor+\left\lfloor\frac{(q+1)k-2}{k}\right\rfloor+\cdots+\left\lfloor\frac{(q+1)k-(k-r)}{k}\right\rfloor$$
$$=(q+1)r+(k-r)q=kq+r=\ell.$$

The fact that the $\ell$-mers of $L$ may not be $k$-way independent for $k\geq
3$ is easy to see. For example, let $S=abcdefgh$, $\ell=3$, $A=P_1$, $B=P_3$,
and $C=P_4$. What is $Prob.[A=B=C]$? If $A=B=C$, then it should be the case
that $a=c,b=d=a,b=c=e$, and $c=f$. In other words, $a=b=c=d=e=f$. The
probability of this happening is $(1/\sigma)^5\neq(1/\sigma)^6$.  \end{proof}

\noindent{\bf Note:} To the best of our knowledge, the above lemma cannot be
found in the existing literature. In \cite{Szp01} a lemma is proven on the
expected depth of insertion of a suffix tree. If anything, this only very
remotely resembles our lemma but is not directly related. In addition the lemma
in \cite{Szp01} is proven only in the limit (when $n$ tends to $\infty$).

\section*{Our Basic Algorithm}
 Let $S=s_1s_2\cdots s_n$ be the given input string. Assume that $S$ is a
string randomly generated from an alphabet $\Sigma$. In particular, each $s_i$
is assumed to have been picked uniformly randomly from $\Sigma$ (for $1\leq
i\leq n$). For all the algorithms presented in this paper, no assumption is
made on the size of $\Sigma$. In particular, it could be anything. For example,
it could be $O(1)$, $O(n^c)$ (for any constant $c$), or larger.

The problem is to produce an array $A[1:n]$ where $A[i]$ is the starting
position of the $i$th smallest suffix of $S$, for $1\leq i\leq n$. The basic
idea behind our algorithm is to sort the suffixes only with respect to their
prefixes of length $O(\log n)$ (bits). The claim is that this amount of sorting is
enough to order the suffixes with {\em high probability}. By high probability
we mean a probability of $\geq (1-n^{-\alpha})$ where $\alpha$ is the
probability parameter (typically assumed to be a constant $\geq 1$). The
probability space under concern is the space of all possible inputs.

Let $S_i$ stand for the suffix that starts at position $i$, for $1\leq i\leq
n$. In other words, $S_i=s_is_{i+1}\cdots s_n$. Let $P_i=s_is_{i+1}\cdots
s_{i+\ell-1}$, for $i\leq (n-\ell)$. When $i>(n-\ell)$, let $P_i=S_i$. The
value of $\ell$ will be decided in the analysis. A pseudocode of our basic
algorithm follows.

%\vspace{0.2in}

\begin{tabbing}
aaaa \= aaaa \= aaaa \= aaaa \kill
{\bf Algorithm SA1}\\
\> Sort $P_1,P_2,\ldots,P_n$ using radix sort;\\
\> The above sorting partitions the $P_i$'s into buckets where equal
$\ell$-mers\\ 
\> are in the same bucket;\\
\> Let these buckets be $B_1,B_2,\ldots,B_m$ where $m\leq n$;\\
\> {\bf for} $i:=1$ {\bf to} $m$ {\bf do}\\
\>\> {\bf if} $|B_i|>1$ {\bf then} sort the suffixes corresponding to the
$\ell$-mers in $B_i$\\
\>\>\> using any relevant algorithm;
\end{tabbing}

%\vspace{0.2in}
 
\begin{lemma}
Algorithm SA1 has a run time of $O(n)$ with high probability.
\end{lemma}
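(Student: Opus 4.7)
The plan is to choose $\ell$ such that, on the one hand, the initial radix sort is feasible in $O(n)$ time and, on the other hand, with high probability all the $\ell$-mers $P_1, \ldots, P_n$ are pairwise distinct so that the inner \textbf{for} loop is trivial. I would set $\ell = \lceil (\alpha+3)\log_\sigma n \rceil$, so that $\sigma^\ell \geq n^{\alpha+3}$. By Lemma~\ref{rslemma}, any two $\ell$-mers $P_i$ and $P_j$ with $i \neq j$ are pairwise independent, giving $\Pr[P_i = P_j] = \sigma^{-\ell} \leq n^{-(\alpha+3)}$. A union bound over the at most $\binom{n}{2}$ pairs then shows $\Pr[\exists\, i \neq j:\, P_i = P_j] = O(n^{-(\alpha+1)})$. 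Thus, with probability at least $1 - n^{-\alpha}$, every bucket $B_i$ has size one, every iteration of the inner loop is skipped, and the sorted order of the $P_i$'s is already the sorted order of the suffixes (for the ``short'' $P_i$'s with $i > n - \ell$, $P_i = S_i$ is simply a prefix of the matching longer suffix and is correctly placed immediately before it).

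Conditioned on this good event, the entire running time is that of the initial radix sort, and the main point requiring care is showing that this step can be made $O(n)$ rather than the naive $O(n\ell) = O(n\log n)$. The key observation is that each $\ell$-mer, viewed as an integer in base $\sigma$, lies in $[0, \sigma^\ell) = [0, n^{\alpha+3})$, so it fits in $O(\log n)$ bits, i.e., a constant number of machine words on a standard unit-cost RAM. Using a rolling encoding---obtaining the integer for $P_{i+1}$ from that of $P_i$ by one subtraction, one multiplication by $\sigma$, and one addition, all on word-sized quantities---I would compute the integer representations of all $n$ of the $P_i$'s in $O(n)$ total time. The $n$ resulting integers can then be radix sorted using $O(1)$ passes of base-$n$ counting sort, each pass costing $O(n)$, for a total of $O(n)$. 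The degenerate case where $\sigma$ is itself already $n^{\Omega(1)}$ gives $\ell = O(1)$ and fits directly into the same argument.

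The hardest part is not any single step but keeping the parameters consistent: $\ell$ must be large enough that the union bound delivers an $n^{-\alpha}$ tail, yet small enough that $\sigma^\ell$ remains a polynomial in $n$ so that each $P_i$ fits in $O(1)$ words and radix sort runs in linear time. With $\ell = \Theta(\log_\sigma n)$ this balance is maintained and the argument closes: on the high-probability event that no two $P_i$'s collide, the algorithm consists of one $O(n)$ radix sort followed by no within-bucket work, so Algorithm SA1 runs in $O(n)$ time with probability at least $1 - n^{-\alpha}$.
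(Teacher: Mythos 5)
Your proof is correct and follows essentially the same route as the paper: apply Lemma~\ref{rslemma} together with a union bound over all pairs to show that with $\ell=\Theta(\log_\sigma n)$ every bucket is a singleton with probability at least $1-n^{-\alpha}$, so the cost reduces to one radix sort on $O(\log n)$-bit keys, which runs in $O(n)$ time. Your added detail on the rolling integer encoding and constant number of base-$n$ counting-sort passes just spells out what the paper handles by citation, and your slightly larger choice of $\ell$ is an immaterial difference in constants.
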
 

\begin{proof}
Consider a specific $P_i$ and let $B$ be the bucket that $P_i$ belongs to after
the radix sorting step in Algorithm SA1. How many other $P_j$'s will there be
in $B$? Using Lemma~\ref{rslemma}, $Prob.[P_i=P_j]=(1/\sigma)^\ell$. This means
that $Prob.[\exists j: i\neq j \& P_i=P_j]\leq n(1/\sigma)^\ell$. As a result,
$Prob.[\exists j: |B_j|>1]\leq n^2(1/\sigma)^\ell$. If
$\ell\geq((\alpha+2)\log_\sigma n)$, then, $n^2(1/\sigma)^\ell\leq
n^{-\alpha}$.

In other words, if $\ell\geq((\alpha+2)\log_\sigma n)$, then each bucket will
be of size 1 with high probability. Also, the radix sort will take $O(n)$ time. Note that we only need to sort $O(\log n)$ bits of each $P_i$ ($1\leq i\leq n$) and this sorting can be done in $O(n)$ time (see e.g., \cite{HSR08}).
\end{proof}

%\vspace{0.2in}

\noindent{\bf Observation 1.} We could have a variant of the algorithm where if
any of the buckets is of size greater than 1, we abort this algorithm and use
another algorithm. A pseudocode follows.

%\vspace{0.2in}

\begin{tabbing}
aaaa \= aaaa \= aaaa \= aaaa \= aaaa \kill
{\bf Algorithm SA2}\\
{\bf 1}\> Sort $P_1,P_2,\ldots,P_n$ using radix sort;\\
\> The above sorting partitions the $P_i$'s into buckets where equal $\ell$-mers \\
\> are in the same bucket; \\
\> Let these buckets be $B_1,B_2,\ldots,B_m$ where $m\leq n$;\\
{\bf 2}\> {\bf if} $|B_i|=1$ for each $i,1\leq i\leq m$\\
{\bf 3}\> {\bf then} output the suffix array and quit;\\
{\bf 4}\> {\bf else} use another algorithm (let it be {\bf Algorithm SA}) to
find and \\
\>\>output the suffix array;
\end{tabbing}

%\vspace{0.2in}

\noindent{\bf Observation 2.} Algorithm SA1 as well as Algorithm SA2 run in
$O(n)$ time on at least $(1-n^{-\alpha})$ fraction of all possible inputs.
Also, if the run time of Algorithm SA is $t(n)$, then the expected run time of
Algorithm SA2 is $(1-n^{-\alpha})O(n)+n^{-\alpha}(O(n)+t(n))$. For example, if
Algorithm SA is the skew algorithm \cite{KaSa03}, then the expected run time
of Algorithm SA2 is $O(n)$ (the underlying constant will be smaller than the
constant in the run time of skew).

\vspace{0.1in}

\noindent{\bf Observation 3.} In general, if $T(n)$ is the run time of
Algorithm SA2 lines 1 through 3 and if $t(n)$ is the run time of Algorithm SA,
then the expected run time of Algorithm SA2 is
$(1-n^{-\alpha})T(n)+n^{-\alpha}(T(n)+t(n))$.

\vspace{0.1in}

\noindent{\bf The case of non-uniform probabilities.} In the above algorithm
and analysis we have assumed that each character in $S$ is picked uniformly
randomly from $\Sigma$. Let $\Sigma=\{a_1,a_2,\ldots, a_\sigma\}$. Now we
consider the possibility that for any $s_i\in S$, $Prob.[s_i=a_j]=p_j$, $1\leq
i\leq n;1\leq j\leq \sigma$. For any two $\ell$-mers $A$ and $B$ of $S$ we can
show that $Prob.[A=B]=\left (\sum_{j=1}^\sigma p_j^2\right )^\ell$. In this
case, we can employ Algorithms SA1 and SA2 with
$\ell\geq(\alpha+2)\log_{1/P}n$, where $P=\sum_{j=1}^\sigma p_j^2$.

\vspace{0.2in}

\noindent{\bf Observation 4.} Both SA1 and SA2 can work with any alphabet size.
If the size of the alphabet is $O(1)$, then each $P_i$ will consist of $O(\log
n)$ characters from $\Sigma$. If $|\Sigma|=\Theta(n^c)$ for some constant $c$,
then $P_i$ will consist of $O(1)$ characters from $\Sigma$. If
$|\Sigma|=\omega(n^c)$ for any constant $c$, then each $P_i$ will consist of a
prefix (of length $O(\log n)$ bits) of a character in $\Sigma$.

\section{Parallel Versions}
In this Section we explore the possibility of implementing SA1 and SA2 on various models of parallel computing. 

\subsection{Parallel Disks Model}
In a Parallel Disks Model (PDM), there is a (sequential or parallel) computer
whose core memory is of size $M$.  The computer has $D$ parallel disks. In one
parallel I/O, a block of size $B$ from each of the $D$ disks can be fetched
into the core memory. The challenge is to devise algorithms for this model that
perform the least number of I/O operations. This model has been proposed to
alleviate the I/O bottleneck that is common for single disk machines especially
when the dataset is large. In the analysis of PDM algorithms the focus is on
the number of parallel I/Os and typically the local computation times are not
considered. A lower bound on the number of parallel I/Os needed to sort $N$
elements on a PDM is $\frac{N}{DB}\frac{\log(N/B)}{\log(M/B)}$. Numerous
asymptotically optimal parallel algorithms have been devised for sorting on the
PDM. For practical values of $N,M,D,$ and $B$, the lower bound basically means
a constant number of passes through the data. Therefore, it is imperative to
design algorithms wherein the underlying constants in the number of I/Os is
small. A number of algorithms for different values of $N,M,D,$ and $B$ that
take a small number of passes have been proposed in \cite{RaSe08}. 

One of the algorithms given in \cite{RaSe08} is for sorting integers. In
particular it is shown that we can sort $N$ random integers in the range
$[1,R]$  (for any $R$) in $(1+\nu)\frac{\log(N/M)}{\log(M/B)}+1$ passes through
the data, where $\nu$ is a constant $<1$. This bound holds with probability
$\geq (1-N^{-\alpha})$, this probability being computed in the space of all
possible inputs.

We can adapt the algorithm of \cite{RaSe08} for constructing suffix arrays as
follows. We assume that the word length of the machine is $O(\log n)$. This is
a standard assumption made in the algorithms literature. Note that if the
length of the input string is $n$, then we need a word length of at least $\log
n$ to address the suffixes. To begin with, the input is stored in the $D$ disks
striped uniformly. We generate all the $\ell$-mers of $S$ in one pass through
the input. Note that each $\ell$-mer occupies one word of the machine. The
generated $\ell$-mers are stored back into the disks. Followed by this, these
$\ell$-mers are sorted using the algorithm of \cite{RaSe08}. At the end of this
sorting, we have $m$ buckets where each bucket has equal $\ell$-mers. As was
shown before, each bucket is of size 1 with high probability. 

We get the following:
\begin{theorem}
We can construct the suffix array for a random string of length $n$  in
$(1+\nu)\frac{\log(n/M)}{\log(M/B)}+2$ passes through the data, where $\nu$ is
a constant $<1$. This bound holds for $\geq (1-n^{-\alpha})$ fraction of all
possible inputs. $\Box$ 
\end{theorem}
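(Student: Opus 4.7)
The plan is to assemble the bound from three ingredients already available in the excerpt: (i) the integer-sorting algorithm of \cite{RaSe08}, which sorts $N$ random integers in $(1+\nu)\frac{\log(N/M)}{\log(M/B)}+1$ parallel I/O passes with probability $\geq 1-N^{-\alpha}$; (ii) the consequence of Lemma~\ref{rslemma} established in the analysis of Algorithm SA1, namely that for $\ell\geq(\alpha+2)\log_\sigma n$ every bucket has size~$1$ with probability $\geq 1-n^{-\alpha}$; and (iii) the standard word-length assumption of $O(\log n)$ bits, which lets each $\ell$-mer $P_i$ be packed into a single machine word and thereby be treated as an integer key.

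First I would fix $\ell=\lceil(\alpha+2)\log_\sigma n\rceil$ and make one scan of the input, striped uniformly on the $D$ disks, to form the $n$ key-index pairs $(P_i,i)$, writing them back in the same striped layout. Because each $P_i$ occupies a single word and because a block of $B$ records per disk is moved on every parallel I/O, this generation step is completed in exactly one pass through the data. Next I would invoke the PDM integer-sorting algorithm of \cite{RaSe08} on the $n$ keys $P_i$; since these keys are bounded integers and the input characters are drawn uniformly at random, the distributional hypothesis required by \cite{RaSe08} is satisfied, and the sort costs $(1+\nu)\frac{\log(n/M)}{\log(M/B)}+1$ passes. The sort leaves the pairs in nondecreasing order of $P_i$, so the sequence of second coordinates read from the disks is already the suffix array $A[1:n]$; no further pass is required. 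Summing the two stages yields the claimed $(1+\nu)\frac{\log(n/M)}{\log(M/B)}+2$ passes.

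For correctness, the output equals the true suffix array exactly when no bucket produced by the sort has more than one element, since in that case ordering by $\ell$-mer and ordering by suffix coincide. By the argument following Algorithm SA1, the probability of this event is at least $1-n^{-\alpha}$. Taking a union bound with the failure event of the randomized sort of \cite{RaSe08}, also of mass at most $n^{-\alpha}$, gives an overall failure probability of at most $2n^{-\alpha}$; replacing $\alpha$ by $\alpha+1$ from the outset absorbs the factor of $2$ and restores the stated $(1-n^{-\alpha})$ fraction-of-inputs guarantee. Both events live on the same underlying probability space, namely the uniform distribution over inputs of length $n$, so the union bound is legitimate.

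The main obstacle I anticipate is not a computation but a bookkeeping one: ensuring that the two distinct high-probability statements borrowed from \cite{RaSe08} and from Lemma~\ref{rslemma} can be composed without blowing up the number of passes or the probability. The pass count is fine because the generation phase contributes a deterministic single pass; the probability composition is handled by the union bound just described. A minor technicality worth stating explicitly is that the last $\ell-1$ positions of $S$ produce short $\ell$-mers (the $P_i=S_i$ case), but these can be padded with a fresh sentinel that is lexicographically smallest and still encoded in a single word, so they do not disturb either the integer-sort invariants or the distinctness argument.
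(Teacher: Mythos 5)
Your proposal is correct and follows essentially the same route as the paper: one pass to pack each $\ell$-mer into a machine word, then the integer-sorting algorithm of \cite{RaSe08} at a cost of $(1+\nu)\frac{\log(n/M)}{\log(M/B)}+1$ passes, with Lemma~\ref{rslemma} guaranteeing singleton buckets (hence a correct suffix array) with high probability. Your additional care with the union bound over the two failure events and the sentinel padding of the short trailing $\ell$-mers only makes explicit details the paper leaves implicit.
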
 

\subsection{The Mesh and the Hypercube}
Optimal algorithms exist for sorting on interconnection networks such as the
mesh (see e.g., \cite{ThKu77} and \cite{KKNT91}), the hypercube (see e.g.,
\cite{ReVa87}), etc. We can use these in conjunction with Algorithms SA1 and
SA2 to develop suffix array construction algorithms for these models. Here
again we can construct all the $\ell$-mers of the input string. Assume that we
have an interconnection network with $n$ nodes and each node stores one of the
characters in the input string. In particular node $i$ stores $s_i$, for $1\leq
i\leq n$. Depending on the network, a relevant indexing scheme has to be used.
For instance, on the mesh we can use a snake-like row-major indexing. Node $i$
communicates with nodes $i+1,i+2,\ldots,i+\ell-1$ to get
$s_{i+1},s_{i+2},\ldots,s_{i+\ell-1}$. The communication time needed is $O(\log
n)$. Once the node $i$ has these characters it forms $P_i$. Once the nodes have
generated the $\ell$-mers, the rest of the algorithm is similar to the
Algorithm SA1 or SA2. As a result, we get the following:

\begin{theorem}
There exists a randomized algorithm for constructing the suffix array for a
random string of length $n$ in $O(\log n)$ time on a $n$-node hypercube with
high probability. The run time of \cite{ReVa87}'s algorithm is $O(\log n)$ with
high probability, the probability being computed in the space of all possible
outcomes for the coin flips made. Also, the same can be done in $O(\sqrt n)$
time on a $\sqrt n\times\sqrt n$ mesh with high probability. $\Box$
\end{theorem}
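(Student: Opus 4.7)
The plan is to simulate Algorithm SA1 on the network and invoke the bucket-uniqueness analysis already carried out for SA1. Embed the input so that node $i$ holds $s_i$ for $1 \le i \le n$, using a Gray-code indexing on the hypercube and a snake-like row-major indexing on the mesh; both embeddings make consecutive indices adjacent. Choose $\ell = \lceil (\alpha+2)\log_\sigma n \rceil$ as in the analysis of SA1, so that by Lemma~\ref{rslemma} together with the union bound used in the proof of the SA1 run-time lemma, with probability at least $1 - n^{-\alpha}$ over the random input every bucket produced by sorting the $P_i$'s has size one.

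The first computational step is to form the $\ell$-mer at each node. Because consecutive indices are neighbors under the chosen embedding, node $i$ collects $s_{i+1}, s_{i+2}, \ldots, s_{i+\ell-1}$ by $\ell-1 = O(\log n)$ pipelined nearest-neighbor shifts along the Hamiltonian path; afterwards every node holds its own $P_i$ in a single machine word. The cost is $O(\log n)$ on the hypercube and the same on the mesh, so this step is absorbed by the subsequent sort in both cases.

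The second step is to sort the $P_i$'s. On the hypercube I would invoke the Reif--Valiant randomized sorter of \cite{ReVa87}, which sorts $n$ keys in $O(\log n)$ time with probability at least $1 - n^{-\alpha}$ over its internal coin flips. On the mesh I would invoke a deterministic $O(\sqrt{n})$-time sorter, for example \cite{ThKu77} or \cite{KKNT91}. Once the $P_i$'s are sorted, since with high probability every bucket is a singleton, the resulting order of indices \emph{is} the suffix array and no intra-bucket work is required; in the (low-probability) event that some bucket is larger than one, we can resolve it locally without changing the asymptotic time bound.

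The main subtlety is handling two independent sources of randomness in the hypercube case. The randomness of the input gives singleton buckets with probability at least $1 - n^{-\alpha}$, while the sorter's internal coin flips give the $O(\log n)$ running time with probability at least $1 - n^{-\alpha}$; a union bound over these two failure events preserves an overall high-probability guarantee, which is exactly the form of the claim in the theorem. A secondary but routine concern is that the $P_i$ keys are only $O(\log n)$ bits long, so a comparison-based hypercube sort handles them in $O(\log n)$ time without any special integer-sort machinery. The mesh case is cleaner since the sort is deterministic and the only probabilistic statement is inherited directly from the bucket-uniqueness analysis of SA1.
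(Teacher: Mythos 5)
Your proposal is correct and follows essentially the same route as the paper: form the $\ell$-mers by nearest-neighbor communication in $O(\log n)$ time, sort them with the Reif--Valiant hypercube sorter (respectively an optimal mesh sorter), and invoke the SA1 analysis to conclude that with high probability over the input every bucket is a singleton, so the sorted order is the suffix array. The details you add beyond the paper's sketch (the Gray-code/snake-like embeddings, the pipelined shifts, and the explicit union bound over the input randomness and the sorter's coin flips) are sound refinements, not a different argument.
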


\vspace{0.1in}

\noindent{\bf Observation:} Please note that on a $n$-node hypercube, sorting
$n$ elements will need $\Omega(\log n)$ time even if these elements are bits,
since the diameter of the hypercube is $\Omega(\log n)$. For the same reason,
sorting $n$ elements on a $\sqrt n\times\sqrt n$ mesh will need $\Omega(\sqrt
n)$ time since $2(\sqrt n-1)$ is the diameter.

%\vspace{0.2in}

%\noindent{\bf The case of longer word lengths.} 

\subsection{PRAM Algorithms}
In \cite{KaSa03} several PRAM algorithms are given. One such algorithm is for
the EREW PRAM that has a run time of $O(\log^2n)$, the work done being $O(n\log
n)$. We can implement Algorithm SA2 on the EREW PRAM so that it has an expected
run time of $O(\log n)$, the expected work done being $O(n\log n)$. Details
follow. Assume that we have $n$ processors. 1) Form all possible $\ell$-mers.
Each $\ell$-mer occupies one word; 2) Sort these $\ell$-mers using the parallel
merge sort algorithm of \cite{Col88};  3) Using a prefix computation check if
there is at least one bucket of size $>1$; 4) Broadcast the result to all the
processors using a prefix computation; 5) If there is at least one bucket of
size more than one, use the parallel algorithm of \cite{KaSa03}.

Steps 1 through 4 of the above algorithm take $O(\log n)$ time each. Step 5
takes $O(\log^2n)$ time. From Observation 3, the expected run time of this
algorithm is $(1-n^{-\alpha})O(\log n)+n^{-\alpha}(O(\log n)+O(\log^2n))=O(\log
n)$. Also, the expected work done by the algorithm is $(1-n^{-\alpha})O(n\log
n)+n^{-\alpha}(O(n\log n)+O(n\log^2n))=O(n\log n)$.

\section{Practical Implementation}
In this section we discuss the design and implementation of the RadixSA algorithm.
The following is the pseudocode of the RadixSA algorithm:
%\vspace{0.2in}

\begin{tabbing}
aaaa \= aaaa \= aaaa \= aaaa \= aaaa \= aaaa \kill
{\bf Algorithm RadixSA}\\
{\bf 1.}\> {\bf radixSort} all suffixes by $d$ characters \\
{\bf 2.}\> {\bf let} $b[i]=$ bucket of suffix $i$ \\
{\bf 3.}\> {\bf for} $i:=n$ {\bf down to} $1$ {\bf do}\\
{\bf 4.}\>\> {\bf if (}$b[i].size > 1${\bf) then} \\
{\bf 5.}\>\>\> {\bf if detectPeriods}($b[i]$){\bf then}\\
{\bf 6.}\>\>\>\> {\bf handlePeriods}($b[i]$);\\
{\bf 7.}\>\>\> {\bf else} \>{\bf radixSort} all suffixes $j \in b[i]$ with
respect to $b[j+d]$\\
\end{tabbing}

A bucket is called {\em singleton} if it contains only one suffix, otherwise
it is called {\em non-singleton}. A
{\em singleton suffix} is the only suffix in a singleton bucket. A
singleton suffix has its final position in the suffix array already determined.

We number the buckets such that two suffixes in different buckets can be compared
simply by comparing their bucket numbers. The {\bf for} loop traverses the suffixes 
from the last to the first position in the text. This order ensures that
after each step, suffix $i$ will be found in a singleton bucket. This is easy
to prove by induction. Thus, at the end of the loop, all
the buckets will be singletons. If each bucket is of size $O(1)$ before the
{\bf for} loop is entered, then it is easy to see that the algorithm runs in
$O(n)$ time. 

Second, even if the buckets are not of constant size (before the
{\bf for} loop is entered) the algorithm is still linear if every suffix takes
part in no more than a constant number of radix sort operations. For that to happen, 
we want to give priority to buckets which can influence as many downstream buckets 
as possible. Intuitively, say a pattern $P$
appears several times in the input. We want to first sort the buckets which contain the
suffixes that start at the tails of the pattern instances. Once these suffixes are 
placed in different buckets we progress towards the buckets containing the heads of
the pattern instances. This way, every suffix is placed in a singleton bucket at a constant
cost per suffix. Our traversal order gives a good approximation of this behavior in 
practice, as we show in the results section.

Table \ref{table_example} shows an example of how the algorithm works. Each
column illustrates the state of the suffix array after sorting one of the
buckets. The order in which buckets are chosen to be sorted follows the pseudocode of RadixSA.
The initial radix sort has depth $1$ for illustration purpose. The last column
contains the fully sorted suffix array.

However, the algorithm as is described above has a worst case runtime of $O(n\sqrt{n})$ (proof omitted).  
We can improve the runtime to $O(n\log n)$ as follows. If, during the for loop, a bucket contains 
suffixes which have been accessed more than a constant $C$ number of times, we skip that bucket.
This ensures that the for loop takes linear time.
If at the end of the loop there have been any buckets skipped, we do another pass
of the for loop. After each pass, every remaining non-singleton bucket has
a sorting depth at least $C+1$ times greater than in the previous round (easy to prove
by induction). Thus, no more than a logarithmic number of passes will be needed and so
the algorithm has worst case runtime $O(n\log n)$.

\begin{table}
\label{table_example}
\caption{Example of suffix array construction steps for string `cdaxcdayca`. b[{\em
 suffix}] stands for the bucket of {\em suffix}. Underlines show the depth of
 sorting in a bucket at a given time. The
initial radix sort has depth $1$ for illustration purpose.}
\begin{tabular}{| l | l | l | l | l |}
\hline
{ Initial buckets} &{ Sort b[a]} & { Sort b[ca]} &{ Sort b[dayca]}    &{ Sort b[cdayca]} \\
\hline
{\underline{a}}& a & a & a & a \\
\cline{2-5}
{\underline{a}}yca& axcdayca & axcdayca & axcdayca & axcdayca \\
\cline{2-5}
{\underline{a}}xcdayca& ayca & ayca & ayca & ayca \\
\hline
{\underline{c}}a&{\underline{c}}a& ca & ca & ca \\
\cline{3-5}
{\underline{c}}dayca&{\underline{c}}dayca&{\underline{cd}}ayca&{\underline{cd}}ayca& cdaxcdayca \\
\cline{5-5}
{\underline{c}}daxcdayca&{\underline{c}}daxcdayca&{\underline{cd}}axcdayca&{\underline{cd}}axcdayca& cdayca\\
\hline
{\underline{d}}ayca&{\underline{d}}ayca&{\underline{d}}ayca& daxcdayca & daxcdayca\\
\cline{4-5}
{\underline{d}}axcdayca&{\underline{d}}axcdayca&{\underline{d}}axcdayca& dayca&
dayca\\
\hline
xcdayca& xcdayca & xcdayca & xcdayca & xcdayca \\
\hline
 yca & yca & yca & yca & yca \\
\hline
\end{tabular}
\end{table}

\subsection{Periods}
In lines 5 and 6 of the RadixSA pseudocode we detect periodic regions of the input  as follows:
if suffixes $i, i-p, i-2p, \ldots$ appear in the same bucket $b$, and bucket $b$ is currently sorted by
$d \ge p$ characters, then we have found a periodic region of the input,
where the period is $p$. If suffix $i$ is less than suffix $i+p$, then suffix $i-p$ 
is less than $i$, $i-2p$ is less than $i-p$, and so on. The case where
$i$ is greater than $i+p$ is analogous. \commentOut{We call suffix $i$ a period leader
because the order of the other suffixes in the period depends on suffix $i$. If
there are several periodic stretches in the same bucket we only sort the
period leaders, then we can easily induce the order of the other suffixes.}
Periods of any length are eventually detected because the depth of sorting in 
each bucket increases after each sort operation. 
This method can be viewed as a generalization of Seward's 
copy method \cite{Sew2000} where a portion of text of size $p$ is treated as a single character. 

\commentOut{
\subsection{Worst Case Run Time of RadixSA}
A suffix $i$ may be accessed several times (say $k$) during the execution of
the algorithm. The first time we access suffix $i$ is when we are sorting the
bucket of some suffix $i_1$ which is in the same bucket as $i$ and has the same
$d-$prefix. After this step, $i$ is placed in a bucket of depth $2d$.
The second time we access suffix $i$ is when we are sorting the bucket of some
suffix $i_2$ which has the same $2d$ prefix as $i$.
Note that $i_1 - i_2 > d$, otherwise, in the first step we would have
detected that $i_1$ and $i_2$ are part of a periodic sequence and
so $i$, $i_1$ and $i_2$ would have been placed in separate buckets. The $k$-th
time we access suffix $i$ is when we are sorting the bucket of some suffix
$i_k$ where $i_{k-1}-i_k>d(k-1)$ by the same reasoning as above. In
conclusion, we have $i_1 - i_2 > d, i_2-i_3>2d, \ldots, i_{k-1}-i_k > d(k-1)$. If  we add up
these inequalities we get $i_1 - i_k > dk(k-1)/2$. Since $n > i_1 - i_k$ we
have $k = O(\sqrt{n/d}) = O(\sqrt{n})$. The worst case run time is thus $O(n \sqrt{n})$.

Here is an input for which the algorithm reaches the worst case run time. Let
$P=a_1a_2\ldots a_k$ where $k=O(\sqrt{n})$ and all the characters in $P$ are
distinct. Form $T$ by concatenating $O(\sqrt{n})$ copies of $P$ then concatenate 
the $k-1$ length prefix of $P$, then the $k-2$ length prefix of $P$ and so on, until the length $1$ prefix of
 $P$ ($T=P P \ldots  P a_1..
 a_{k-1} a_1.. a_{k-2} \ldots a_1a_2a_1$). $T$ is a worst case input for
 RadixSA because it forces the buckets to be sorted in the ``wrong'' order,
preventing the depths of the suffixes to propagate quickly.
}

\subsection{Implementation Details}
Radix sorting is a central operation in RadixSA. We tried several implementations,
 both with Least Significant Digit (LSD) and Most Significant Digit (MSD) first order. 
The best of our implementations was a cache-optimized LSD radix sort. The cache optimization 
is the following. In a regular LSD radix sort, for every digit we
do two passes through the data: one to compute bucket sizes, one to assign
items to buckets. We can save one pass through the data per digit if in the bucket assignment pass we 
also compute bucket sizes for the next round \cite{LaMarca97}. We
took this idea one step forward and we computed bucket counts for all
rounds before doing any assignment. Since in our program we only sort numbers of
at most $64$ bits, we have a constant number of bucket size arrays to store in
memory. To sort small buckets we employ an iterative merge sort.

To further improve cache performance, in the bucket array we store not only the bucket start
position but also a few bits indicating the
length of the bucket. Since the bucket start requires $\lceil \log n
\rceil$ bits, we use the remaining bits, up to the machine word size, to
store the bucket length. This prevents a lot of cache misses when small buckets are the
majority. For longer buckets, we store the lengths in a separate array which also stores 
bucket depths.

\commentOut{
To reduce the size of the buffer array used in radix sort we first sort the data
by 16 bits, for which we don't need any buffer, just the count array and two
passes through the data. Then we take each bucket, buffer the next 64 bits of
each suffix in the bucket and apply radix sort. This completes step 1 in
the RadixSA pseudocode above.
}

The total additional memory used by the algorithm, besides input and output, is
 $5n + o(n)$ bytes: $4n$ for the bucket array, $n$ bytes for bucket depths and lengths, 
 and a temporary buffer for radix sort.

\section{Experimental Results}
One of the fastest SACAs,
in practice, is the Bucket Pointer Refinement (BPR) algorithm \cite{ScSt07}.
Version 0.9 of BPR has been compared \cite{ScSt07} with several other
algorithms: deep shallow \cite{MaFe04}, cache and copy by Seward \cite{Sew2000}, 
qsufsort \cite{LaSa07}, difference-cover \cite{BuKa03}, 
divide and conquer by Kim et al. \cite{KJP04}, and skew \cite{KaSa03}. BPR 0.9 has been
shown to outperform these algorithms on most inputs \cite{ScSt07}. 
Version 2.0 of BPR further improves over version 0.9. We compare RadixSA with both versions of 
BPR. 

Furthermore, a large set of SACAs are collected in the jSuffixArrays library \cite{JSA} 
under a unified interface. This library contains Java implementations of: DivSufSort \cite{DivSufSort} , 
QsufSort \cite{LaSa07}, SAIS \cite{NZC09}, skew \cite{KaSa03} and DeepShallow
\cite{MaFe04}. We include them in the comparison with the note that these 
Java algorithms may incur a performance penalty compared to their C counterparts. 

We tested all algorithms on an Intel core $i3$ machine with
$4GB$ of RAM, Ubuntu $11.10$ Operating System, Sun Java $1.6.0\_26$ virtual
machine and gcc $4.6.1$. The Java Virtual Machine was allowed to use up to $3.5GB$
of memory. As inputs, we used the datasets of \cite{ScSt07} which
include DNA data, protein data, English alphabet data, general ASCII alphabet data and
artificially created strings such as periodic and Fibonacci
strings\footnote{Fibonacci strings are similar to Fibonacci numbers, but addition is replaced with concatenation ($F_0=b, F_1=a, F_i$ is a concatenation of $F_{i-1}$ and $F_{i-2}$).}.

For every dataset, we executed each algorithm $10$ times. The average run times
are reported in table \ref{table_results} where the best run times are
shown in bold. \commentOut{For close values we used $T$-tests with a $p$-value of $0.01$ to 
confirm or reject ties.} Furthermore, we counted the number of times RadixSA accesses each suffix.
The access counts are shown in figure \ref{fig_avg_access}. For almost all datasets, the number
of times each suffix is accessed is a small constant. For the Fibonacci string the
number of accesses is roughly logarithmic in the length of the input. 

\begin{figure}[h]
\caption{\label{fig_avg_access}Average number of times RadixSA accesses each suffix, for datasets from
\cite{ScSt07}.}
\includegraphics[scale=0.4]{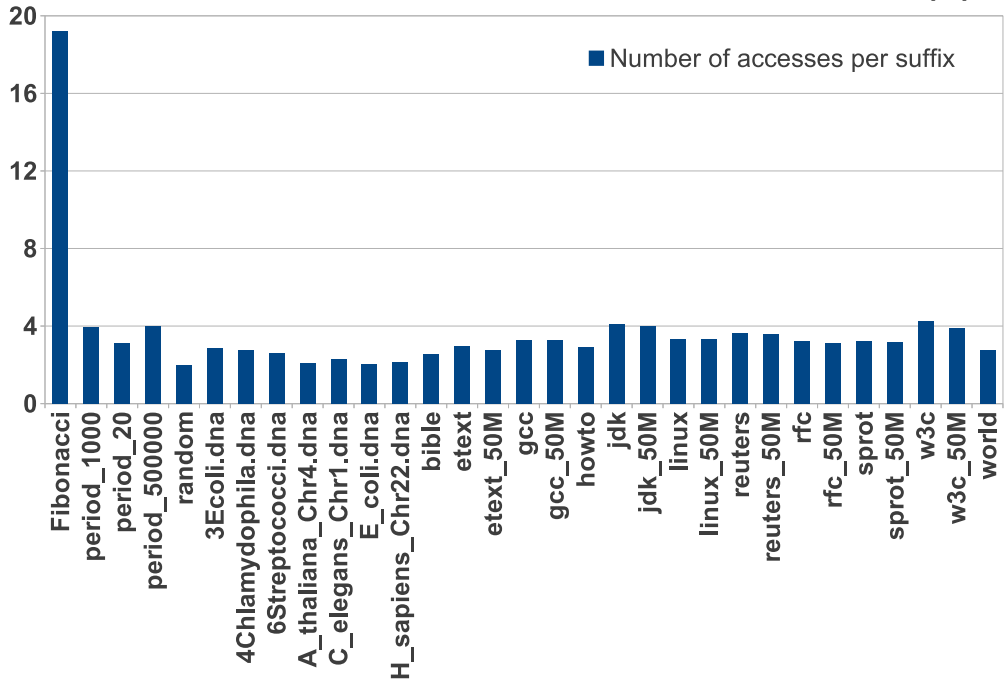}
\end{figure}

    \begin{table}
    
    \caption{\label{table_results}Comparison of run times on datasets from \cite{ScSt07} on a 64-bit Intel CORE i3
machine with 4GB of RAM, Ubuntu $11.10$ Operating System, Sun Java $1.6.0\_26$
and gcc $4.6.1$. Run times are in seconds, averaged over 10 runs. Bold font indicates the
best time. \commentOut{Ties were established
using T-tests.} ML means out of memory, TL means more than 1 hour.
    }
    \scalebox{0.75}{
\begin{tabular}{| c | c | c | c | c | c | c |
c | c | c | c |}
\hline 
\multicolumn{3}{| c |}{Dataset} & \multicolumn{8}{ c |}{Run time}  \\
\hline
 &  & & & & & DivSuf &
QSuf & &  & Deep \\
Name & Length & $|\Sigma|$ & RadixSA  & BPR2  & BPR.9 &  Sort &
Sort & SAIS & skew & Shallow 
\\
\hline
Fibonacci & 20000000 & 2 & 7.88 & 12.48 & 14.05 & { 6.81} & 26.44 & {\bf
5.50} & 14.53 & 369.48
\\
\hline
period\_1000 & 20000000 & 26 & {\bf 2.12} & 3.52 & 5.71 &  { 3.15} & 20.42 &
6.59 & 23.27 & TL
\\
\hline
period\_20 & 20000000 & 17 & {\bf 1.44} & 1.95 & 43.39 & { 1.83} & 11.05 &
2.83 & 7.15 & TL
\\
\hline
period\_500000 & 20000000 & 26 & {\bf 2.78} & { 4.60} & 6.31 & 4.74 & 23.32 &
8.56 & 25.68 & 2844.37
\\
\hline
random & 20000000 & 26 & {\bf 2.25} & { 3.34} & 4.87 & 6.35 & 5.02 & 11.75 &
22.05 & 5.69
\\
\hline
3Ecoli.dna & 14776363 & 5 & {\bf 2.23} & { 2.67} & 3.43 & 4.00 & 13.85 & 6.14
& 19.62 & 433.54
\\
\hline
4Chlamydophila.dna & 4856123 & 6 & {\bf 0.61} & { 0.67} & 0.90 & 1.71 & 3.24
& 1.93 & 5.24 & 4.80
\\
\hline
6Streptococci.dna & 11635882 & 5 & {\bf 1.63} & { 1.79} & 2.38 & 2.88 & 7.08
& 4.98 & 14.88 & 4.26
\\
\hline
A\_thaliana\_Chr4.dna & 12061490 & 7 & {\bf 1.27} & { 1.74} & 2.40 & 3.02 &
5.13 & 5.37 & 15.71 & 3.52
\\
\hline
C\_elegans\_Chr1.dna & 14188020 & 5 & {\bf 1.61} & { 1.95} & 2.65 & 3.21 &
6.91 & 5.69 & 17.18 & 6.92
\\
\hline
E\_coli.dna & 4638690 & 4 & {\bf 0.41} & { 0.51} & 0.58 & 1.36 & 1.72 & 1.96
& 5.04 & 1.37
\\
\hline
H\_sapiens\_Chr22.dna & 34553758 & 5 & {\bf 4.40} & { 5.66} & 8.21 & 7.76 &
15.31 & 15.59 & 49.70 & 10.98
\\
\hline
bible & 4047391 & 63 &  { 0.51} & {\bf 0.48} & 0.80 & 1.24 & 1.38 &
1.56 & 4.64 & 1.08
\\
\hline
etext & 105277339 & 146 & {\bf 19.40} & { 23.09} & 43.46 & 26.56 & 62.63 &
54.70 & ML & 119.96
\\
\hline
etext\_50M & 50000000 & 120 & {\bf 8.13} & { 9.74} & 17.26 & 11.94 &
26.40 & 24.46 & 88.57 & 79.07
\\
\hline
gcc & 86630400 & 150 & {\bf 13.84} & { 15.58} & 24.50 & 15.84 & 46.20 & 33.62
& 135.12 & 80.78
\\
\hline
gcc\_50M & 50000000 & 121 & {\bf 7.21} &  9.56 & 13.26 & { 8.31} & 28.43 &
17.73 & 68.65 & 264.90
\\
\hline
howto & 39422104 & 197 & {\bf 5.96} & { 6.35} & 10.26 & 8.41 & 17.64 & 16.67
& 64.73 & 16.33
\\
\hline
jdk & 69728898 & 113 & {\bf 12.07} & { 12.54} & 26.86 & { 12.74} & 39.92 &
24.66 & 102.76 & 58.22
\\
\hline
jdk\_50M & 50000000 & 110 & {\bf 8.32} & {\bf 8.30} & 17.05 & { 8.91} & 26.30
& 17.58 & 71.31 & 36.98
\\
\hline
linux & 116254720 & 256 & {\bf 19.27} & {\bf 19.34} & 29.67 & { 21.17} &
61.99 & 44.47 & ML & 58.71
\\
\hline
linux\_50M & 50000000 & 256 & {\bf 7.62} & {\bf 7.60} & 10.50 & { 8.84} &
27.54 & 18.18 & 76.10 & 31.92
\\
\hline
reuters & 114711150 & 93 & {\bf 19.76} & { 25.08} & 60.72 & { 25.07} &
74.78 & 49.17 & ML & 87.57
\\
\hline
reuters\_50M & 50000000 & 91 & {\bf 7.84} & { 9.53} & 20.41 & 10.24 & 26.94 &
20.29 & 77.25 & 33.68
\\
\hline
rfc & 116421900 & 120 & {\bf 21.18} & { 22.08} & 42.75 & 22.55 & 66.28 &
47.99 & ML & 42.14
\\
\hline
rfc\_50M & 50000000 & 110 & {\bf 8.23} & { 8.39} & 14.85 & 9.24 & 24.80
& 19.61 & 76.64 & 16.63
\\
\hline
sprot & 109617186 & 66 & {\bf 18.48} & { 22.79} & 47.07 & 25.52 & 69.58 &
50.40 & ML & 48.69
\\
\hline
sprot\_50M & 50000000 & 66 & {\bf 7.57} & { 9.10} & 16.81 & 10.88 & 28.07 &
21.69 & 78.47 & 20.03
\\
\hline
w3c & 104201578 & 256 & {\bf 18.82} & {\bf 18.78} & 35.94 & { 20.01} & 74.09
& 38.29 & ML & 1964.80
\\
\hline
w3c\_50M & 50000000 & 255 & {\bf 7.93} & { 8.33} & 17.67 & 8.73 & 25.95 &
17.26 & 71.42 & 36.59
\\
\hline
world & 2473399 & 94 & { 0.30} & {\bf 0.27} &  0.42 & 0.91 & 0.86 & 0.91
& 2.35 & 0.78
\\
\hline
\end{tabular}
}
    \end{table}  

\section{Discussion and Conclusions}

In this paper we have presented an elegant algorithm for the construction of
suffix arrays. This algorithm is one of the simplest algorithms known for
suffix arrays construction and runs in $O(n)$ time on a large fraction of all
possible inputs. It is also nicely parallelizable. We have shown how our
algorithm can be implemented on various parallel models of computing. 

We have also given an extension of this algorithm, called RadixSA, which has a worst
case runtime of $O(n \log n)$ and proved to be efficient in practice. 
RadixSA uses a heuristic to select the order in which buckets are 
processed so as to reduce the number of operations performed. RadixSA performed a linear number 
of operations on all but one 
of the inputs tested. The heuristic could find application as an independent speedup technique 
for other algorithms which use bucket sorting and induced copying. For example, BPR could
use it to determine the order in which it chooses buckets to be refined.
A possible research direction is to improve RadixSA's heuristic. Buckets can be processed based on 
a topological sorting of their dependency graph. Such a graph has at most $n/2$ 
nodes, one for each non singleton bucket, and at most $n/2$ edges. Thus, it
has the potential for a lightweight implementation. 
 
An interesting open problem is to devise a randomized algorithm that has a
similar performance.

%%%%%%%%%%%%%%%%%%%%%%%%%%%%%%%%
\section{Authors contributions}
SR and MN designed and analyzed the algorithms. MN implemented RadixSA
and carried out the empirical experiments. SR and MN analyzed the empirical
results and drafted the manuscript.

%%%%%%%%%%%%%%%%%%%%%%%%%%%
\section{Acknowledgements}
 This work has been supported in part by the following grants: NSF 0829916 and NIH R01LM010101.

%% References
%%
%% Following citation commands can be used in the body text:
%% Usage of \cite is as follows:
%%   \cite{key}         ==>>  [#]
%%   \cite[chap. 2]{key} ==>> [#, chap. 2]
%%

%% References with bibTeX database:

\bibliographystyle{elsarticle-num}
\bibliography{sa-references} 

%% Authors are advised to submit their bibtex database files. They are
%% requested to list a bibtex style file in the manuscript if they do
%% not want to use elsarticle-num.bst.

%% References without bibTeX database:

% \begin{thebibliography}{00}

%% \bibitem must have the following form:
%%   \bibitem{key}...
%%

% \bibitem{}

% \end{thebibliography}

%% The Appendices part is started with the command \appendix;
%% appendix sections are then done as normal sections
%% \appendix

%% \section{}
%% \label{}

\end{document}